\newcolumntype{C}[1]{>{\centering\let\newline\\\arraybackslash\hspace{0pt}}m{#1}}
\newcommand{\be}{\begin{equation}}
\newcommand{\ee}{\end{equation}}
\newcommand{\bear}{\begin{eqnarray}}
\newcommand{\eear}{\end{eqnarray}}
\newcommand{\bears}{\begin{eqnarray*}}
\newcommand{\eears}{\end{eqnarray*}}
\newcommand{\bi}{\begin{itemize}}
\newcommand{\ei}{\end{itemize}}
\newcommand{\ben}{\begin{enumerate}}
\newcommand{\een}{\end{enumerate}}
\DeclareMathOperator{\lcm}{lcm}
\DeclareMathOperator{\lift}{lift}
\DeclareMathOperator{\rows}{rows}
\definecolor{ogreen}{rgb}{0,0.5,0}
\definecolor{magenta}{rgb}{1.0, 0.11, 0.81}
\definecolor{mulberry}{rgb}{0.77, 0.29, 0.55}
\definecolor{xgray}{rgb}{0.5, 0.5, 0.5}
\definecolor{ao}{rgb}{0.0, 0.5, 0.0}
\definecolor{amber}{rgb}{1.0, 0.75, 0.0}
\definecolor{capri}{rgb}{0.0, 0.75, 1.0}
\definecolor{chocolate}{rgb}{0.91, 0.41, 0.17}
\definecolor{aquamarine}{rgb}{0.87, 0.91, 0.84}
\definecolor{text}{rgb}{0.2, 0.50, 0.1}
\newtheorem{theorem}{Theorem}
\newtheorem{lemma}{Lemma}
\newtheorem{proposition}{Proposition}
\theoremstyle{definition}
\newtheorem{example}{Example}
\newtheorem{definition}{Definition}
\newtheorem{remark}{Remark}
\newcommand*{\rowstyle}[1]{
  \gdef\@rowstyle{#1}%
  \leavevmode\@rowstyle
  \ignorespaces
}
\newcolumntype{=}{
  >{\gdef\@rowstyle{}\ignorespaces}%
}
\newcolumntype{+}{
  >{\leavevmode\@rowstyle\ignorespaces}%
}
\begin{document}
\title{Lifting Private Information Retrieval from Two to any Number of Messages}

\author{Rafael G.L. D'Oliveira, Salim El Rouayheb \thanks{This work   was supported in part by NSF Grant CCF 1817635.} \\ ECE, Rutgers University, Piscataway, NJ\\ Emails: rd746@scarletmail.rutgers.edu, salim.elrouayheb@rutgers.edu }


\maketitle

\begin{abstract}
We study private information retrieval (PIR) on coded data with possibly colluding servers. Devising PIR schemes with optimal download rate  in the case of collusion and coded data is still open in general. We provide a lifting operation that can transform  what we call one-shot PIR schemes for two messages into schemes for any number of  messages. We apply this lifting operation on existing PIR schemes and describe two immediate implications. First, we obtain novel PIR schemes with improved download rate in the case of MDS coded data and server collusion. Second, we provide a simplified description of  existing optimal PIR schemes on replicated data as lifted secret sharing based PIR. 

\end{abstract}

\section{Introduction}

We consider the  problem of designing  private information retrieval (PIR) schemes  on coded data stored on multiple servers that can possibly collude. 
In this setting, a user wants to download a message from a server with $M$ messages while revealing no information, in an information-theoretic sense, about which message it is interested in. The database is  replicated on $N$ servers, or in general, could be stored using an erasure code, typically a Maximum Distance Separable (MDS) code\footnote{The assumption here is that messages are divided into chunks which are  encoded separately into $n$ coded chunks using the same code.}. 
 These servers could possibly collude to gain information about the identity of the user's retrieved message. 

The PIR problem was first introduced and studied in \cite{PIR1995, chor1998private} and was followed up by a large body  of work (e.g. \cite{gasarch2004survey,sun2016capacitynoncol, sun2016capacity, yekhanin2010private, beimel2001information, beimel2002breaking}). The model there assumes the database to be replicated and focuses on PIR schemes with efficient total communication rate, i.e., upload and download.  Motivated by big data applications and recent advances in the theory of codes for distributed storage, there has been a growing interest in designing PIR schemes that can query data that is stored in coded form and not just replicated. For this setting, the assumption has been that the messages being retrieved are very large (compared to the queries) and therefore the focus has been on designing PIR schemes that minimize the  download rate.  Despite  significant recent progress, the problem of characterizing the optimal PIR download rate (called PIR capacity) in the case of coded data and server collusion remains open in general.


%
%

\noindent{\em{Related work:}} 
When the data is replicated,  the problem of finding the PIR capacity,  i.e., minimum download rate, is essentially solved.   It was shown in \cite{sun2016capacitynoncol} and \cite{sun2016capacity} that the  PIR capacity  is $(1+T/N+T^2/N^2+\dots+T^{M-1}/N^{M-1})^{-1}$, where $N$ is the number of servers, $T$ is the number of colluding servers and $M$ is the number of messages. Capacity achieving PIR schemes were also presented in \cite{sun2016capacitynoncol} and \cite{sun2016capacity}. 

 When the data is coded  and stored on a large number of servers (exponential in the number of messages), it was shown  in \cite{shah2014one} that downloading one extra bit is enough to achieve privacy. In \cite{chan2014private}, the authors derived bounds on the tradeoff between storage cost and download cost for linear coded data and studied properties of PIR schemes on MDS data. Explicit constructions of efficient PIR scheme on MDS data were first presented in  \cite{tajeddine2016private} for both collusions and no collusions. Improved PIR schemes for MDS coded data with collusions were presented in \cite{freij2016private}. PIR schemes for general linear codes, not necessarily MDS, were studied in \cite{kumar2017private}. The PIR capacity for MDS coded data and no collusion was determined in  \cite{banawan2016capacity}, and remains unknown for the case of collusions.  

\noindent{\em Contributions:}  We introduce what we refer to as a  lifting operation that transforms a class of \emph{one-shot} linear PIR schemes that can retrieve privately one out of a total of two messages, into general PIR schemes on any number of messages. 

In the literature, the majority of  PIR schemes on coded data, such as those in \cite{tajeddine2016private} and \cite{freij2016private}, are one-shot schemes. First, we describe a refinement operation on these schemes that improves their rate for two messages. Then, we describe how the refined version can be lifted to any number of messages. Finally, we apply the lifting operation on existing PIR schemes and describe two immediate implications: 

\begin{itemize}
\item Applying the lifting operation to the schemes presented in\cite{tajeddine2016private} and \cite{freij2016private}, we obtain novel PIR schemes  with improved download rate for MDS coded data and server collusion.
\item The capacity achieving  PIR schemes on replicated data in \cite{sun2016capacitynoncol} and \cite{sun2016capacity} can be seen as lifted secret sharing.
\end{itemize}


\section{Setting}

A set of $M$ messages, $\{ \bm{W}^1, \bm{W}^2, \ldots, \bm{W}^M \} \subseteq \mathbb{F}_q^L$, are stored on $N$ servers each using an $(N,K)$-MDS code. We denote by $ \bm{W}_i^j \in \mathbb{F}_q^{L/K}$, the data about $\bm{W}^j$ stored on server~$i$.
\[
\begin{tabular}{c | c  c  c  c}
 & server $1$ & server $2$ &$\cdots$ & server $N$ \\
\hline
$\bm{W}^1$ & $\bm{W}_1^1$ & $\bm{W}_2^1$ &$\cdots$ & $\bm{W}_N^1$ \\
$\bm{W}^2$ & $\bm{W}_1^2$ & $\bm{W}_2^2$ &$\cdots$ & $\bm{W}_N^2$  \\
$\vdots$ & $\vdots$ & $\vdots$ & $\vdots$ & $\vdots$  \\
$\bm{W}^M$ & $\bm{W}_1^M$ & $\bm{W}_2^M$ &$\cdots$ & $\bm{W}_N^M$ \\
\end{tabular}
\]
Since the code is MDS, each $\bm{W}^j$ is determined by any $K$-subset of $\{ \bm{W}^j_1 , \ldots, \bm{W}^j_N \}$.

The data on server $i$ is $\bm{D}_i = (\bm{W}^1_i , \ldots, \bm{W}^M_i) \in \mathbb{F}_q^{ML/K}$.

A \emph{linear query} (from now on we omit the term linear) is a vector $\bm{q} \in \mathbb{F}_q^{ML/K}$. When a user sends a query $\bm{q}$ to a server $i$, this server answers back with the inner product $\langle \bm{D}_i , \bm{q} \rangle \in \mathbb{F}_q$.

The problem of private information retrieval can be stated informally as follows: A user wishes to download a file $\bm{W}^m$ without leaking any information about $m$ to any of the servers where at most $T$ of them may collude. The goal is for the user to achieve this while minimizing the download rate.

The messages $\bm{W}^1, \bm{W}^2, \ldots, \bm{W}^M$ are assumed to be independent and uniformly distributed elements of $\mathbb{F}_q^L$. The user is interested in a message $\bm{W}^m$. The index of this message, $m$, is chosen uniformly at random from the set $\{1,2,\ldots,M\}$.

A \emph{PIR scheme} is a set of queries for each possible desired message $\bm{W}^m$. We denote a scheme by $\mathcal{Q} = \{ \mathcal{Q}^1, \ldots, \mathcal{Q}^M \}$ where $\mathcal{Q}^m = \{ Q_1^m, \ldots, Q_N^m \}$ is the set of queries which the user will send to each server when they wish to retrieve $\bm{W}^m$. So, if the user is interested in $\bm{W}^m$, $Q^m_i$ denotes the set of queries sent to server $i$. The set of answers,  $\mathcal{A} = \{ \mathcal{A}^1, \ldots, \mathcal{A}^M \}$, is defined analogously.

A PIR scheme should satisfy two properties:
\begin{enumerate}
\item Correctness: $H(\bm{W}^m | \mathcal{A}^m) = 0$.
\item $T$-Privacy:  $I(\cup_{j \in J} \mathcal{Q}_j^m ; m) = 0$, for every $J \subseteq [M]$ such that $|J|=T$, where $[M]=\{1, \ldots, M \}$.
\end{enumerate}

Correctness guarantees that the user will be able to retrieve the message of interest. $T$-Privacy guarantees that no $T$ colluding servers will gain any information on the message in which the user is interested.

\begin{definition}
Let $M$ messages be stored using an $(N,K)$-MDS code on $N$ servers. An $(N,K,T,M)$-PIR scheme is a scheme which satisfies correctness and $T$-Privacy.
\end{definition}

Note that $T$-Privacy implies in  $|\mathcal{Q}^1|=|\mathcal{Q}^i|$ for every $i$, i.e., the number of queries does not depend on the desired message.

\begin{definition}
The \emph{PIR rate} of an $(N,K,T,M)$-PIR scheme $\mathcal{Q}$ is $R_\mathcal{Q}= \frac{L}{|\mathcal{Q}^1|}$.
\end{definition}


\section{One-Shot Schemes}

In this section, we introduce the notion of a one-shot scheme, which captures  the majority of the schemes in the literature.

Without loss of generality, we assume that the user is interested in retrieving the first message.  We denote by $$\mathbb{V}_1 = \{ \bm{a} \in \mathbb{F}_q^{ML/K}  : i>L/K \Rightarrow \bm{a}_i = 0 \},$$ the subspace of queries which only query the first message.

\begin{definition}
An $(N,K,T,M)$-one-shot PIR scheme of co-dimension $r$ is an $(N,K,T,M)$-PIR scheme where each server is queried exactly once and in the following way.

\begin{table}[H]
\centering
\begin{tabular}{c c c c c c}
Server $1$ & $\cdots$ & Server $r$ & Server $r+1$ & $\cdots$ & Server $N$ \\
\hline
$\bm{q}_1$ & $\cdots$ & $\bm{q}_r$ & $\bm{q}_{r+1}+\bm{a}_1 $ & $\cdots$ & $\bm{q}_N+\bm{a}_{N-r}$  \\
\end{tabular}
\caption{Query structure for a one-shot scheme.}
\label{table:oneshot}
\end{table}

The queries in Table \ref{table:oneshot} satisfy the following properties:
\begin{enumerate}
\item Any collection of $T$ queries from $\bm{q}_1, \ldots, \bm{q}_N \in \mathbb{F}_q^{ML/K}$ is uniformly and independently distributed. \label{property1}
\item The $\bm{a}_1, \ldots, \bm{a}_{N-r} \in \mathbb{V}_1$ are such that the responses $\langle \bm{D}_{r+1} , \bm{a}_1 \rangle , \ldots, \langle \bm{D}_N , \bm{a}_{N-r} \rangle$ are linearly independent. \label{property2}
\item For $i>r$, the response $\langle \bm{D}_i , \bm{q_i} \rangle$ is a linear combination of $\langle \bm{D}_1 , \bm{q}_1 \rangle , \ldots, \langle \bm{D}_r , \bm{q_r} \rangle$. \label{property3}
\end{enumerate}
\end{definition}

Property \ref{property1} ensures privacy. Properties \ref{property2} and \ref{property3} ensure correctness.


\begin{proposition} \label{pro:oneshotpir}
Let $\mathcal{Q}$ be an $(N,K,T,M)$-one-shot scheme of co-dimension $r$. Then, its rate is given by
\[R_\mathcal{Q} = \frac{N-r}{N} = 1 - \frac{r}{N} .\]
\end{proposition}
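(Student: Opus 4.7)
The plan is to unpack both the numerator and the denominator of the rate formula $R_\mathcal{Q} = L/|\mathcal{Q}^1|$ directly from the three properties in the definition. First I would argue the denominator: in a one-shot scheme each of the $N$ servers is queried exactly once, so $|\mathcal{Q}^1| = N$ regardless of $r$. This is immediate from the scheme structure in Table \ref{table:oneshot}.

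Next I would argue the numerator by showing the user effectively recovers exactly $N-r$ independent $\mathbb{F}_q$-symbols about $\bm{W}^1$. From Property \ref{property3}, for each server $i > r$ the component $\langle \bm{D}_i, \bm{q}_i\rangle$ of the response $\langle \bm{D}_i, \bm{q}_i + \bm{a}_{i-r}\rangle$ is a known linear combination of the responses $\langle \bm{D}_1, \bm{q}_1\rangle,\ldots,\langle \bm{D}_r, \bm{q}_r\rangle$ from the first $r$ servers. Subtracting it off leaves the user with $\langle \bm{D}_{r+1}, \bm{a}_1\rangle, \ldots, \langle \bm{D}_N, \bm{a}_{N-r}\rangle$, and by Property \ref{property2} these $N-r$ quantities are linearly independent combinations of symbols of $\bm{W}^1$ alone (since each $\bm{a}_j \in \mathbb{V}_1$).

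Finally I would combine this with the correctness requirement $H(\bm{W}^1 \mid \mathcal{A}^1) = 0$. Since the responses collectively yield at most $N-r$ linearly independent $\mathbb{F}_q$-symbols about $\bm{W}^1$, correctness forces $L \le N-r$; and since the scheme is defined to carry exactly this many useful combinations, the message length $L$ is taken to equal $N-r$. Putting the pieces together gives $R_\mathcal{Q} = L/N = (N-r)/N = 1 - r/N$.

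The main obstacle is conceptual rather than technical: one has to recognize that the ``co-dimension'' $r$ corresponds precisely to the number of responses the user sacrifices for privacy (the $r$ servers queried without any $\bm{a}_j$), and that Properties \ref{property2} and \ref{property3} together are exactly what is needed to guarantee that the remaining $N-r$ responses decouple into linearly independent measurements of $\bm{W}^1$. Once that interpretation is in hand, no calculation is required.
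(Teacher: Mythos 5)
Your proposal is correct and follows essentially the same route as the paper: the paper's proof is a one-line observation that Property \ref{property3} lets the user strip off the $\langle \bm{D}_i,\bm{q}_i\rangle$ parts for $i>r$ and recover the $N-r$ linearly independent symbols $\langle \bm{D}_{r+1},\bm{a}_1\rangle,\ldots,\langle \bm{D}_N,\bm{a}_{N-r}\rangle$ from the $N$ downloaded symbols, which is exactly your argument spelled out in more detail. Your slight hand-wave that ``$L$ is taken to equal $N-r$'' is handled in the paper by the remark immediately after the proposition (repeat the scheme $\lcm(N-r,L)$ times when $N-r \nmid L$, which leaves the rate unchanged), so there is no real discrepancy.
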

\begin{proof}
Since for every $i>r$, $\langle \bm{D}_i , \bm{q_i} \rangle$ is a linear combination of $\langle \bm{D}_1 , \bm{q}_1 \rangle , \ldots, \langle \bm{D}_r , \bm{q_r} \rangle$, the user can retrieve the linearly independent $\langle \bm{D}_{r+1} , \bm{a}_1 \rangle , \ldots, \langle \bm{D}_N , \bm{a}_{N-r} \rangle$.
\end{proof}

Technically,  $N-r$ must be divisible by $L$. When this does not occur, the one-shot scheme must be repeated  $\lcm(N-r,L)$ times\footnote{We denote the least common multiple of $N-r$ and $K$ by $\lcm(N-r,L)$.}. This, however, does not change the rate of the scheme.

We present an example of a one-shot scheme from \cite{tajeddine2016private}.

\begin{example} \label{exe:oneshot}
Suppose the messages are stored using a $(4,2)$-MDS code over $\mathbb{F}_3$ in the following way: 
\[
\resizebox{.95\hsize}{!}{
\begin{tabular}{c | c  c  c  c}
 & Server $1$ & Server $2$ & Server $3$ & Server $4$ \\
\hline
$\bm{W}^1$ & $\bm{W}_1^1$ & $\bm{W}_2^1$ &$\bm{W}_1^1+\bm{W}_2^1$ & $\bm{W}_1^1+2 \bm{W}_2^1$ \\
$\bm{W}^2$ & $\bm{W}_1^2$ & $\bm{W}^2_2$ & $\bm{W}_1^2 + \bm{W}^2_2$ & $\bm{W}_1^2 + 2 \bm{W}^2_2$ \\
$\vdots$ & $\vdots$ & $\vdots$ & $\vdots$ & $\vdots$  \\
$\bm{W}^M$ & $\bm{W}_1^M$ & $\bm{W}^M_2$ & $\bm{W}_1^M + \bm{W}^M_2$ & $\bm{W}_1^M + 2 \bm{W}^M_2$
\end{tabular}
} 
\]
Suppose the user is interested in the first message and wants $2$-privacy, i.e., at most $2$ servers can collude. The following is a $(4,2,2,M)$-one-shot scheme taken from \cite{tajeddine2016private}.
\begin{table}[H]
\centering
\begin{tabular}{c | c c c c}
 & Server $1$ &  Server $2$ & Server $3$ & Server $4$ \\
\hline
Queries & $\bm{q}_1$ & $\bm{q}_2$ & $\bm{q}_3$ & $\bm{q}_4+\bm{e}_1$   \\
Responses & $\langle \bm{D}_1 , \bm{q}_1 \rangle$ & $\langle \bm{D}_2 , \bm{q}_2 \rangle$ & $\langle \bm{D}_3 , \bm{q}_3 \rangle$ & $\langle \bm{D}_4 , \bm{q}_4 \rangle$ 
\end{tabular}
\caption{Query and response structure for Example \ref{exe:oneshot}.}
\label{table:example1}
\end{table}
The queries in Table \ref{table:example1} satisfy the following properties:
\begin{itemize}
\item The queries $\bm{q}_1,\bm{q}_2 \in \mathbb{F}_q^{ML/K}$ are uniformly and independently distributed.
\item We have $\bm{q}_3=\bm{q}_1+\bm{q}_2$ and $\bm{q}_4=\bm{q}_1+2\bm{q}_2$.
\item The query $\bm{e}_1 \in \mathbb{V}_1$ corresponds to the queries (in this case there is only $r=1$ query) $\bm{a}_1, \ldots, \bm{a}_{N-r}$ in \mbox{Table \ref{table:oneshot}}, and is the first vector of the standard basis of $\mathbb{F}_q^{ML/K}$, i.e, $\bm{e}_1$ only has entry $1$ in the first coordinate and $0$ on all the other coordinates.
\end{itemize}

This scheme is private since for any two servers the queries are uniformly and independently distributed.

To retrieve $\langle \bm{D}_4 , \bm{e}_1 \rangle$ the user uses the following identity:
\begin{align} \label{eq:oneshotidentity}
\langle \bm{D}_4 , \bm{q}_4 \rangle = - \langle \bm{D}_1 , \bm{q}_1 \rangle +2 \langle \bm{D}_2 , \bm{q}_2 \rangle + 2 \langle \bm{D}_3 , \bm{q}_3 \rangle .
\end{align}

With this we have one linear combination of $\bm{W}^1$, the first coordinate of $\bm{W}^1_1+\bm{W}^1_2$. Repeating this $\lcm(N-r,L)$ times, we obtain enough combinations to decode $\bm{W}^1$.

To retrieve $1$ unit of the message the user has to download $4$ units. Therefore, the rate of the PIR scheme is $R=1/4$, which could have also been obtained from Proposition \ref{pro:oneshotpir}.
\end{example}

\section{The Refinement Lemma}

The rate of a one-shot scheme is independent of the number of messages. In this section, we show how to refine a one-shot scheme to obtain a better rate for the case of two messages.

Analogous to $\mathbb{V}_1$, we denote by
\[ \mathbb{V}_2 = \{ \bm{b} \in \mathbb{F}_q^{ML/K}  : \text{$i < L/K+1$ or $i>2L/K$} \Rightarrow \bm{b}_i = 0 \}\]
the subspace of queries which only query the second message.

\begin{lemma}[The Refinement Lemma] \label{lem:ref}
Let $\mathcal{Q}$ be a one-shot scheme of co-dimension $r$, with rate $\frac{N-r}{N}$. Then, there exists  an $(N,K,T,2)$-PIR scheme, $\mathcal{Q'}$, with rate $R_{\mathcal{Q'}} = \frac{N}{N+r} > \frac{N-r}{N}$.
\end{lemma}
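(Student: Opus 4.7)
\noindent\emph{Proof plan.} My plan is to build $\mathcal{Q}'$ by appending $r$ auxiliary queries to a super-round of $\mathcal{Q}$, exploiting a feature that appears only when $M=2$: every randomizer query $\bm{q}_i \in \mathbb{F}_q^{2L/K}$ splits uniquely as $\bm{q}_i = \bm{q}_i^{(1)} + \bm{q}_i^{(2)}$ with $\bm{q}_i^{(j)} \in \mathbb{V}_j$, so each ``wasted'' randomizer response $x_i = \langle \bm{D}_i, \bm{q}_i \rangle$ equals $\langle \bm{W}_i^1, \bm{q}_i^{(1)} \rangle + \langle \bm{W}_i^2, \bm{q}_i^{(2)} \rangle$---a sum of a fresh message-$1$ combination and a fresh message-$2$ combination on server $i$. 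If the user can learn the $r$ cross-terms $\langle \bm{W}_i^{3-m}, \bm{q}_i^{(3-m)} \rangle$ for $i \le r$, subtracting them from $x_i$ exposes $r$ previously-hidden message-$m$ combinations that $\mathcal{Q}$ was discarding.

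Concretely, for desired message $m$ I would make a super-round of $\mathcal{Q}'$ consist of the $N$ main queries of $\mathcal{Q}$ (with $\bm{a}_j \in \mathbb{V}_m$ so that the $v_j = \langle \bm{D}_{r+j}, \bm{a}_j \rangle$ are combinations of $\bm{W}^m$), followed by $r$ auxiliary queries to servers $1, \ldots, r$ designed to yield the above cross-terms. Correctness then gives $N - r$ message-$m$ symbols from the main block (by Proposition \ref{pro:oneshotpir}) plus $r$ more from the subtraction step $x_i^{(m)} = x_i - x_i^{(3-m)}$, for a total of $N$ independent message-$m$ linear combinations per $N+r$ queries. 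After scaling by $\lcm$ as in the one-shot case, $|\mathcal{Q}'^m| = L(N+r)/N$ so $R_{\mathcal{Q}'} = \frac{N}{N+r}$, which strictly exceeds $\frac{N-r}{N}$ since $N^2 > N^2 - r^2$.

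The main obstacle is ensuring that the $r$ auxiliary queries do not betray $m$. My plan is to design them symmetrically under $m \leftrightarrow 3-m$: the auxiliary block should introduce fresh independent uniform randomness that, combined with a symmetric use of $\mathbb{V}_1$ and $\mathbb{V}_2$, makes the joint distribution of (main, auxiliary) queries at any $T$-subset of servers the same for $m = 1$ and $m = 2$. Privacy of the main block is inherited from Property \ref{property1} of $\mathcal{Q}$, and privacy of the auxiliary block follows from the same property applied to the auxiliary randomness once the symmetric design is in place. The technical crux is checking that the auxiliary randomness can be scheduled so that the user can still isolate the cross-terms $\langle \bm{W}_i^{3-m}, \bm{q}_i^{(3-m)} \rangle$ from the observed auxiliary responses; this is where the careful interaction between $\mathbb{V}_m$ and $\mathbb{V}_{3-m}$ in the construction of the $r$ extra queries needs to be carried out.
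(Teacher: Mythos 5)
Your architecture and bookkeeping match the paper's: $N+r$ queries per round, $N$ recovered symbols of $\bm{W}^m$, rate $\frac{N}{N+r}>\frac{N-r}{N}$ since $N^2>N^2-r^2$. The gap is that you defer exactly the step that constitutes the content of the lemma: the construction of the $r$ auxiliary queries. As you set it up, the auxiliary query to server $i\le r$ whose response is the cross-term $\langle \bm{W}_i^{3-m},\bm{q}_i^{(3-m)}\rangle$ is the vector $\bm{q}_i^{(3-m)}$ itself, which is supported on $\mathbb{V}_{3-m}$; a single non-colluding server therefore reads off $m$ from the support of that query, so the scheme as literally described is not even $1$-private, and your ``design them symmetrically under $m\leftrightarrow 3-m$'' is stated as a goal rather than achieved. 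The paper's construction is precisely the resolution of this crux: to each server $i\le r$ it sends \emph{two} $1$-queries, $\bm{a}_i\in\mathbb{V}_1$ and $\bm{b}_i\in\mathbb{V}_2$, identically distributed (each with the distribution induced on $\mathbb{V}_j$ by the one-shot randomizer $\bm{q}_i$) and in random order; to each server $i>r$ it sends the single $2$-query $\bm{a}_i+\bm{b}_i$, where the $\bm{b}_i$ follow the induced randomizer distribution so that $\langle\bm{D}_i,\bm{b}_i\rangle$ for $i>r$ is cancellable from $\langle\bm{D}_1,\bm{b}_1\rangle,\ldots,\langle\bm{D}_r,\bm{b}_r\rangle$, and the $\bm{a}_i$ are useful symbols. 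In your language this amounts to replacing the pair $(\bm{q}_i,\bm{q}_i^{(3-m)})$ at server $i\le r$ by the informationally equivalent pair $(\bm{q}_i^{(1)},\bm{q}_i^{(2)})$: the server's view becomes manifestly symmetric in the two messages, and the subtraction step disappears because the message-$m$ part of $x_i$ is now a direct response.

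Two smaller points. First, your claim of ``$N$ independent message-$m$ linear combinations'' is not automatic: Property \ref{property2} gives independence only of the $N-r$ symbols from the main block, and the joint independence with the $r$ symbols extracted from servers $1,\ldots,r$ must be imposed (the paper does this by choosing the $\bm{a}_i$, $i>r$, so that all of $\langle\bm{D}_1,\bm{a}_1\rangle,\ldots,\langle\bm{D}_N,\bm{a}_N\rangle$ are linearly independent, with a large-field/high-probability caveat). Second, even after symmetrizing, $T$-privacy requires checking the \emph{joint} distribution of both components across any $T$ servers, which is where Property \ref{property1} is applied componentwise after the random reordering; this is the (admittedly terse) content of the paper's ``privacy is inherited'' sentence and should not be waved off as following ``from the same property'' without that reduction.
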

\begin{proof}
We construct $\mathcal{Q}'$ in the following way.
\begin{table}[H]
\centering
\begin{tabular}{c c c c c c}
Server $1$ & $\cdots$ & Server $r$ & Server $r+1$ & $\cdots$ & Server $N$ \\
\hline
 $\bm{a}_1$ & $\cdots$ & $\bm{a_r}$ & $\bm{a_{r+1}}+\bm{b_{r+1}} $ & $\cdots$ & $\bm{a}_N+\bm{b_}N$  \\
 $\bm{b}_1$ & $\cdots$ & $\bm{b_r}$ & & &
\end{tabular}
\caption{Query structure for $\mathcal{Q}'$.}
\label{table:refinementlemma}
\end{table}
The queries in Table \ref{table:refinementlemma} satisfy the following properties:
\begin{itemize}
\item The queries $\bm{a}_i \in \mathbb{V}_1$ and $\bm{b}_i \in \mathbb{V}_2$.
\item Each query $\bm{b}_i$ is chosen with distribution induced by the query $\bm{q}_i$ of the one-shot scheme $\mathcal{Q}$.\footnote{A probability distribution on $\mathbb{F}_q^{ML/K}$ induces a probability distribution on $\mathbb{V}_2 \subseteq \mathbb{F}_q^{ML/K}$. }
\item Any subset of size $T$ of $\langle \bm{D}_{1} , \bm{b}_1 \rangle , \ldots, \langle \bm{D}_N , \bm{b}_N \rangle$ is linearly independent\footnote{For large fields this occurs with high probability.\label{footnote}}.
\item For $i \leq r$, $\bm{a_i}$ is chosen with distribution identical to $\bm{b}_i$.
\item For $i>r$, $\bm{a_i}$ is chosen such that the set of responses $\langle \bm{D}_{1} , \bm{a}_1 \rangle , \ldots, \langle \bm{D}_N , \bm{a_{N}} \rangle$ is linearly independent$^{\ref{footnote}}$.
\end{itemize}

Privacy is inherited from the one-shot scheme by randomizing the order in which the queries to a server are sent.

The following is also inherited from the one-shot scheme: For $i>r$, $\langle \bm{D}_i , \bm{b_i} \rangle$ is a linear combination of $\langle \bm{D}_1 , \bm{b}_1 \rangle , \ldots, \langle \bm{D}_r , \bm{b_r} \rangle$. Thus, the user can retrieve the linearly independent  $\langle \bm{D}_{1} , \bm{a}_1 \rangle , \ldots, \langle \bm{D}_N , \bm{a_{N}} \rangle$.
\end{proof}

We now apply the refinement lemma to Example~\ref{exe:oneshot}.

\begin{example} \label{exe:refined}
Consider Example~\ref{exe:oneshot} but with two messages. Applying the refinement lemma, we get the following scheme.
\begin{table}[H]
\centering
\begin{tabular}{c c c c}
 Server $1$ &  Server $2$ & Server $3$ & Server $4$ \\
\hline
$\bm{a}_1$ & $\bm{a}_2$ & $\bm{a}_3$ &  $\bm{a}_4+ \bm{b}_4$   \\
  $\bm{b}_1$ &  $\bm{b}_2$ &  $\bm{b}_3$ &   
\end{tabular}
\caption{Query structure of the refinement of Table \ref{table:example1}.}
\label{table:example2}
\end{table}
The queries in Table \ref{table:example2} satisfy the following properties:
\begin{itemize}
\item The queries $\bm{a}_i \in \mathbb{V}_1$ and $\bm{b}_i \in \mathbb{V}_2$.
\item The queries $\bm{b}_1$ and $\bm{b}_2$ are uniformly and independently distributed and are linearly independent.
\item We have $\bm{b}_3=\bm{b}_1+\bm{b}_2$ and $\bm{b}_4=\bm{b}_1+2\bm{b}_2$.
\item The queries $\bm{a}_1$ and $\bm{a}_2$ are uniformly and independently distributed and are linearly independent.
\item We have $\bm{a}_3=\bm{a}_1+\bm{a}_2$ and $\bm{a}_4=\bm{a}_1+3\bm{a}_2$.
\end{itemize}

Privacy is inherited from the one-shot scheme.

To retrieve $\langle \bm{D}_4 , \bm{a}_4 \rangle$ we use the following identity (inherited from \eqref{eq:oneshotidentity} in the one-shot scheme): 
\[ \langle \bm{D}_4 , \bm{b}_4 \rangle = - \langle \bm{D}_1 , \bm{b}_1 \rangle +2 \langle \bm{D}_2 , \bm{b}_2 \rangle + 2 \langle \bm{D}_3 , \bm{b}_3 \rangle .\]

The choice $\bm{a}_4=\bm{a}_1+3\bm{a}_2$ is done so that the set of responses $\langle \bm{D}_{1} , \bm{a}_1 \rangle , \ldots , \langle \bm{D}_4 , \bm{a_{4}} \rangle$ is linearly independent.

As per Lemma \ref{lem:ref}, the rate of this scheme is $R=4/7$, larger than the rate of $1/4$ in Example \ref{exe:oneshot}.

\end{example}

\begin{remark}
The scheme in Example \ref{exe:oneshot} is defined over the field $\mathbb{F}_3$. However, the refinement of this scheme in Example~\ref{exe:refined} requires a larger field since we need the coefficient $3$ in \mbox{$a_4 = a_1 + 3a_2$} so that the set $\{ \langle \bm{D}_{1} , \bm{a}_1 \rangle , \ldots , \langle \bm{D}_4 , \bm{a_{4}} \rangle \}$ is linearly independent. In our schemes, we will assume that the base field is large enough.
\end{remark}

\section{The Lifting Theorem}

In this section, we present our main result in Theorem \ref{teo:lift}. We show how to extend, by means of a lifting operation, the refined scheme on two messages to any number of messages. Informally, the lifting operation consists of two steps: a symmetrization step, and a way of dealing with ``leftover'' queries that result from the symmetrization. 
We also introduce a symbolic matrix representation for PIR schemes which simplifies our analysis. 

\subsection{An Example of the Lifting Operation} \label{sec:exlift}

We denote by 
\begin{equation*}
\resizebox{.95\hsize}{!}{$\mathbb{V}_j = \{ \bm{b} \in \mathbb{F}_q^{ML/K}  : \text{$i < (j-1)L/K+1$ or $i>jL/K$} \Rightarrow \bm{b}_i = 0 \}$,}
\end{equation*}
the subspace of queries which only query the $j$-th message.

\begin{definition}
A $k$-query is a sum of $k$ queries, each belonging to a different $\mathbb{V}_j$, $j \in [M]$.
\end{definition}

So, for example, if $\bm{a}\in \mathbb{V}_1$, $\bm{b}\in \mathbb{V}_2$, and $\bm{c}\in \mathbb{V}_3$, then $\bm{a}$ is a $1$-query, $\bm{a}+\bm{b}$ is a $2$-query, and $\bm{a}+\bm{b}+\bm{c}$ is a $3$-query.

Consider the scheme in Example \ref{exe:refined}. We represent the structure of this scheme by means of the following matrix:
\begin{align} \label{eq:s2}
 S_2 = \begin{pmatrix}
1 & 1 & 1 & 2
\end{pmatrix}
.\end{align}

Each column of $S_2$ corresponds to a server. A $1$ in column~ $i$ represents sending all possible combinations of $1$-queries of every message to server $i$, and a $2$ represents sending all combinations of $2$-queries of every message to server $i$. We call this matrix the \emph{symbolic matrix} of the scheme.

The co-dimension $r=3$ tells us that for every $r=3$ ones there is $N-r=1$ twos in the symbolic matrix.

Given the interpretation above, the symbolic matrix $S_2$ can be readily applied to obtain the structure of a PIR scheme for any number of messages $M$. For $M=3$, the structure is as follows.
\begin{table}[H]
\centering
\begin{tabular}{c  c  c  c}
server $1$ & server $2$ & server $3$ & server $4$ \\
\hline
$\bm{a}_1$ & $\bm{a}_2$ & $\bm{a}_3$ & $\bm{a}_4 + \bm{b}_4 $  \\
$\bm{b}_1$ & $\bm{b}_2$ & $\bm{b}_3$ &  $\bm{a}_5 + \bm{c}_4 $\\
$\bm{c}_1$ & $\bm{c}_2$ & $\bm{c}_3$ &  $\bm{b}_5 + \bm{c}_5$\\
\end{tabular}
\caption{Query structure for $M=3$ in Example \ref{exe:oneshot} as implied by the symbolic matrix in \eqref{eq:s2}.}
\label{table:partiallift}
\end{table}

The relationships between the queries in Table \ref{table:partiallift} is taken from the one-shot scheme and satisfy the following properties:

\begin{itemize}
\item The $a_i \in \mathbb{V}_1$, $b_i \in \mathbb{V}_2$, and $c_i \in \mathbb{V}_3$.
\item The $a$'s and $b$'s are chosen as in Example \ref{exe:refined}.
\item The $c$'s are chosen analogously to the $b$'s.
\item The extra ``leftover" term $\bm{b}_5+\bm{c}_5$ is chosen uniformly and independent and different from zero.
\end{itemize}

The scheme in Table \ref{table:partiallift} has rate $5/12$. In this scheme, the role of $\bm{b}_5+\bm{c}_5$ is to achieve privacy and does not contribute to the decoding process. In this sense, it can be seen as a ``leftover'' query of the symmetrization. By repeating the scheme $r=3$ times, each one shifted to the left, so that the ``leftover'' queries appear in different servers, we can apply the same idea in the one-shot scheme to the ``leftover" queries, as shown in Table~\ref{table:lifted}. Thus, we   improve the rate from $5/12$ to $16/37$.

\begin{table}[H]
\centering
\begin{tabular}{c  c  c  c}
server $1$ & server $2$ & server $3$ & server $4$ \\
\hline
$\bm{a}_1$ & $\bm{a}_2$ & $\bm{a}_3$ & $\bm{a}_4 + \bm{b}_4 $  \\
$\bm{b}_1$ & $\bm{b}_2$ & $\bm{b}_3$ &  $\bm{a}_5 + \bm{c}_4 $\\
$\bm{c}_1$ & $\bm{c}_2$ & $\bm{c}_3$ &  $\bm{b}_5 + \bm{c}_5$\\ \hdashline[2pt/1pt]
$\bm{a}_7$ & $\bm{a}_8$ & $\bm{a}_9 + \bm{b}_9 $  & $\bm{a}_6$  \\
$\bm{b}_7$ & $\bm{b}_8$ & $\bm{a}_{10} + \bm{c}_9 $  &  $\bm{b}_6$ \\
$\bm{c}_7$ & $\bm{c}_8$ & $\bm{b}_{10} + \bm{c}_{10}$  & $\bm{c}_6$ \\
\hdashline[2pt/1pt]
$\bm{a}_{13}$ & $\bm{a}_{14} + \bm{b}_{14} $ & $\bm{a}_{11}$ & $\bm{a}_{12}$   \\
$\bm{b}_{13}$ & $\bm{a}_{15} + \bm{c}_{14} $ & $\bm{b}_{11}$ & $\bm{b}_{12}$ \\
$\bm{c}_{13}$ & $\bm{b}_{15} + \bm{c}_{15}$ & $\bm{c}_{11}$ & $\bm{c}_{12}$ \\
\hdashline[2pt/1pt]
$\bm{a}_{16}+\bm{b}_{16}+\bm{c}_{16}$ &  &  &   \\
\end{tabular}
\caption{Query structure for the lifted scheme.}
\label{table:lifted}
\end{table}

The queries in Table \ref{table:lifted} satisfy the following properties:

\begin{itemize}
\item The scheme is separated into four rounds.
\item In each of the first three rounds the queries behave as in Table \ref{table:partiallift}, but shifted to the left so that the ``leftover'' queries appear in different servers.
\item But now, $\bm{b}_{16}+\bm{c}_{16}$, $\bm{b}_{15}+\bm{c}_{15}$, $\bm{b}_{10}+\bm{c}_{10}$, and $\bm{b_{5}}+\bm{c_{5}}$ are chosen analogously to the one-shot scheme.
\end{itemize}

More precisely, $\bm{b}_{16}+\bm{c}_{16}$ and $\bm{b}_{15}+\bm{c}_{15}$ are uniformly and independently distributed and are linearly independent, \[\bm{b}_{10}+\bm{c}_{10}= (\bm{b}_{16}+\bm{c}_{16}) + (\bm{b}_{15}+\bm{c}_{15}) \] and \[ \bm{b_{5}}+\bm{c_{5}} = (\bm{b}_{16}+\bm{c}_{16} )+ 2 (\bm{b}_{15}+\bm{c}_{15}) .\]

In this way, $\langle \bm{D}_1, \bm{a}_{16} \rangle$ can be retrieved using the following identity (analogous to Example \ref{exe:refined}):
\begin{multline*}\
\langle \bm{D}_1, \bm{b}_{16}+\bm{c}_{16} \rangle = 2 \langle \bm{D}_2, \bm{b}_{15}+\bm{c}_{15} \rangle  + 2 \langle \bm{D}_3, \bm{b}_{10}+\bm{c}_{10} \rangle \\ - \langle \bm{D}_1, \bm{b}_{5}+\bm{c}_{5} \rangle 
\end{multline*}

This scheme can be represented by the following matrix\footnote{We omit zeros in our symbolic matrices. }.
\[ S_3 = \begin{pmatrix}
1 & 1 & 1 & 2 \\
1 & 1 & 2 & 1 \\
1 & 2 & 1 & 1 \\
3 & & &
\end{pmatrix} \]

The scheme for $M=3$ messages was constructed recursively using the one for $2$ messages. It is this recursive operation that we call lifting. The main idea behind the lifting operation is that $r=3$ entries with value $k$ generate $N-r=1$ entry with value $k+1$ in the symbolic matrix.

Lifting $S_3$ to $S_4$ follows the same procedure: repeat $S_3$ $r=3$ times, each one shifted to the left, to produce $N-r=1$ $4$-query. As a result, we obtain the following symbolic matrix.

\[ S_4 = \begin{pmatrix}
1 & 1 & 1 & 2 \\
1 & 1 & 2 & 1 \\
1 & 2 & 1 & 1 \\
3 & & & \\
1 & 1 & 2 & 1 \\
1 & 2 & 1 & 1 \\
2 & 1 & 1 & 1 \\
 & & & 3 \\
1& 2 & 1 & 1 \\
2 & 1& 1 & 1 \\
1 & 1 & 1 & 2 \\
 & & 3 & \\
  & 4 &  & 
\end{pmatrix} \]

The queries are to be chosen analogously to the previous examples which we describe rigorously in the next subsection.

\subsection{The Symbolic Matrix and the Lifting Operation}

\begin{definition}
Let $\mathcal{Q}$ be a one-shot scheme with co-dimension $r$. A symbolic matrix $S_M$ for $\mathcal{Q}$ is defined recursively as follows.
\begin{align} 
S_2 &= (\overbrace{1,\ldots,1}^{r}, \overbrace{2,\ldots,2}^{N-r}) \label{eq:S_2} \\
S_{M+1} &= \lift(S_M) \label{eq:symbolicmatrix}
\end{align}

The lifting operation is defined as,
\[ \lift(S_M) = \begin{pmatrix}
S_M \\
\sigma (S_M) \\
\vdots \\
\sigma^{r-1} (S_M)\\
A
\end{pmatrix}, \]
where $\sigma(S_M)$ shifts the columns of $S_M$ to the left and $A$ is a matrix which we will describe later in detail.

Formally, $\sigma(S_M)(i,j) = S_M(i,j+1)$ for $1\leq j \leq n-1$ and $\sigma(S_M)(i,n) = S_M(i,1)$. Here, $\sigma(S_M)(i,j)$ is the entry in the $i$-row and $j$-th column of the matrix $\sigma(S_M)$.

The matrix $A$ is constructed as follows: We first define an ordering\footnote{This is known in the literature as a lexicographical order.}, $\prec$, on $(i,j) \in \mathbb{N}^2$ by $(i,j) \prec (i',j')$ if either $i<i'$ or $i=i'$ and $j<j'$.

Let $B = \{(i,j): S_M (i,j) = M \}= \{b_1, \ldots, b_{\#(M,S_M)} \}$ such that $i<j$ implies in $b_i \prec b_j$, where \[\#(k,A)= | \{ (i,j)\in [n]\times [m] : A_{ij}=k \} | .\]

Define \[\tau(i,j) = \left\{\begin{matrix}
(i+\rows(S_{M}),j-1) & \text{if} \hspace{5pt} j>1, \\ 
(i+\rows(S_{M}),N) & \text{if} \hspace{5pt} j=1,
\end{matrix}\right. \]
where $\rows(S_{M})$ is the number of rows in $S_{M}$.

Define the auxiliary sets $B_i = \{b_i, \tau(b_i), \ldots, \tau^{r-1}(b_i) \}$ and $c(B_i) = \{ j : (i,j) \in B_i)$.

Then, $A$ is defined as \[ A_{i,j} = \left\{\begin{matrix}
0 & \text{if} \hspace{5pt} j\in c(B_i),\\ 
M+1 & \text{if} \hspace{5pt} j\notin c(B_i).
\end{matrix}\right.   \]
\end{definition}

As an example, we show how $S_4$ in Section \ref{sec:exlift} is constructed in terms of $S_3$. In this case, $r=3$ and the matrix $A$ consists of a single row.

\begin{center}
\includegraphics[scale=1]{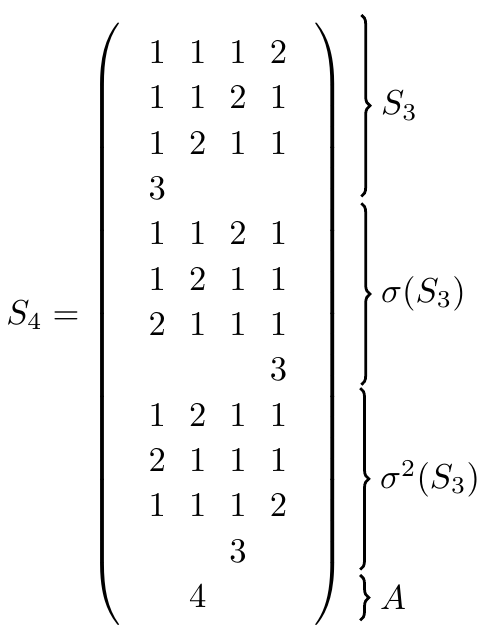}
\end{center}

\noindent{\bf Translation from symbolic matrix to PIR scheme.} 
Each entry $k$ of  a symbolic matrix $S_M$ represents $\binom{M}{k}$ $k$-queries, one for every combination of $k$ messages. The queries are taken analogously to the queries in the one-shot scheme. This is done by making the $r$ queries represented by \mbox{$B_i = \{b_i, \tau(b_i), \ldots, \tau^{r-1}(b_i) \}$} to generate the $N-r$ queries represented by $\{ (i,j) : A_{i,j} = M+1 \}$.

To find the rate of the lifted scheme we need to count the number of entries in the symbolic matrix of a specific value.

\begin{proposition} \label{pro:count}
Let $S_M$ be the symbolic matrix of a one-shot scheme with co-dimension $r$. Then,
\[ \#(k,S_M) =  \left( \frac{N-r}{r} \right)^{k-1} r^{M-1} \hspace{10pt} 1 \leq k \leq M \]
\end{proposition}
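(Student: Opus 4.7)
The plan is to induct on $M$, paralleling the recursive definition in \eqref{eq:S_2}--\eqref{eq:symbolicmatrix}. For the base case $M = 2$, the matrix $S_2$ has by definition $r$ ones and $N - r$ twos, which matches $r^{M-1} = r$ for $k = 1$ and $\left(\frac{N-r}{r}\right) r = N - r$ for $k = 2$.

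For the inductive step, I would decompose $S_{M+1} = \lift(S_M)$ into two blocks: the stack of $r$ cyclic column-shifts $S_M, \sigma(S_M), \ldots, \sigma^{r-1}(S_M)$, and the appended block $A$, whose entries take only the values $0$ and $M+1$. Since $\sigma$ merely permutes columns, each $\sigma^{j}(S_M)$ has exactly the same multiset of entries as $S_M$, so for every $1 \leq k \leq M$ only the stack contributes, giving
\[
\#(k, S_{M+1}) = r \cdot \#(k, S_M) = r \cdot \left(\frac{N-r}{r}\right)^{k-1} r^{M-1} = \left(\frac{N-r}{r}\right)^{k-1} r^{M},
\]
which is precisely the claimed value for $S_{M+1}$.

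For the top value $k = M + 1$, only $A$ contributes. By construction, $A$ has one row per $b_i \in B$, hence $\#(M, S_M)$ rows in total, and each row contains $r$ zeros (at the columns in $c(B_i)$) together with $N - r$ entries equal to $M + 1$. Combining with the inductive hypothesis $\#(M, S_M) = (N-r)^{M-1}$ yields
\[
\#(M+1, S_{M+1}) = (N - r) \cdot \#(M, S_M) = (N - r)^{M},
\]
which agrees with $\left(\frac{N-r}{r}\right)^{M} r^{M}$, closing the induction.

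The one piece of bookkeeping that must be checked carefully is that $|c(B_i)| = r$ for every $i$, so that each row of $A$ really does contain exactly $N - r$ copies of $M+1$. Since $\tau$ decrements the column index by one modulo $N$, the iterates $b_i, \tau(b_i), \ldots, \tau^{r-1}(b_i)$ visit $r$ distinct columns whenever $r \leq N$, which holds because the co-dimension cannot exceed the number of servers. A parallel check on rows (using that $\tau$ increments the row by $\rows(S_M)$) shows that $B_i$ stays inside the $r$-fold stack and never reaches $A$, so the row count of $A$ is genuinely $|B| = \#(M, S_M)$, as needed.
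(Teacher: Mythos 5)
Your proof is correct, and it is organized differently from the paper's. The paper's argument is a recursion in $k$ within a fixed $S_M$: it asserts, directly from the structure of the lift, that $\#(k,S_M)=\frac{N-r}{r}\,\#(k-1,S_M)$, iterates this down to $k=1$, and plugs in $\#(1,S_M)=r^{M-1}$ (itself an unstated induction on $M$). You instead run a single induction on $M$, splitting $\lift(S_M)$ into the $r$-fold stack of column-shifts (which multiplies every existing count by $r$, since $\sigma$ only permutes columns) and the block $A$ (which contributes exactly $(N-r)\,\#(M,S_M)$ entries of the new top value $M+1$). The two routes rest on the same structural fact --- $r$ entries of value $k$ beget $N-r$ entries of value $k+1$ --- but your version makes explicit the bookkeeping the paper leaves implicit: that $A$ has one row per element of $B$, hence $\#(M,S_M)$ rows, and that $|c(B_i)|=r$ because $\tau$ steps the column index cyclically through $r\le N$ distinct values, so each row of $A$ carries exactly $N-r$ entries equal to $M+1$. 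That extra care is a genuine improvement in rigor over the paper's one-line recursion; the paper's phrasing is more compact but would need essentially your verification to be made airtight.
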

\begin{proof}
It follows from the lifting operation that
\begin{align*}
\#(k,S_M) &= \frac{N-r}{r} \#(k-1,S_M) \\
&= \left( \frac{N-r}{r} \right)^{k-1} \#(1,S_M) \\
&= \left( \frac{N-r}{r} \right)^{k-1} r^{M-1}.
\end{align*}
\end{proof}

\begin{theorem} \label{teo:lift}
Let $\mathcal{Q}$ be a one-shot scheme of co-dimension $r$. Then, refining and lifting  $\mathcal{Q}$ gives an $(N,K,T,M)$-PIR scheme $\mathcal{Q}'$ with rate
\[ R_{\mathcal{Q}'} = \frac{(N-r)N^{M-1}}{N^M - r^M} = \frac{N-r}{N \left( 1 - \left(\frac{r}{N}\right)^M\right)} .\]
\end{theorem}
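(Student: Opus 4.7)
The plan is to verify correctness and $T$-privacy of $\mathcal{Q}'$ and then compute its rate. The rate calculation is the main quantitative content and reduces to two binomial sums.

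First, I would count the total number of queries $Q_{\mathrm{tot}}$ per basic run of $\mathcal{Q}'$. Each entry of value $k$ in $S_M$ represents $\binom{M}{k}$ queries (one per $k$-subset of the $M$ messages), so by Proposition~\ref{pro:count},
\[ Q_{\mathrm{tot}} \;=\; \sum_{k=1}^{M} \binom{M}{k}\,\#(k,S_M) \;=\; \sum_{k=1}^{M} \binom{M}{k}\,r^{M-k}(N-r)^{k-1}. \]
Pulling out a factor $1/(N-r)$, padding with the missing $k=0$ term, and using $(r+(N-r))^M = N^M$ collapses this to $Q_{\mathrm{tot}} = (N^M - r^M)/(N-r)$.

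Second, I would count the number $U$ of independent scalars of message~$1$ extracted per basic run. Of the $\binom{M}{k}$ queries at each level-$k$ entry, exactly $\binom{M-1}{k-1}$ contain message~$1$ as a component; the lifting construction is designed so that the pure $\mathbb{V}_1$-projection of each such query is recoverable (argued below), giving
\[ U \;=\; \sum_{k=1}^{M} \binom{M-1}{k-1}\,\#(k,S_M) \;=\; r^{M-1}\sum_{j=0}^{M-1} \binom{M-1}{j}\left(\tfrac{N-r}{r}\right)^{j} \;=\; N^{M-1}. \]
The rate is then $R_{\mathcal{Q}'} = U / Q_{\mathrm{tot}} = (N-r) N^{M-1} / (N^M - r^M)$, matching the claimed expression.

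The main obstacle is justifying $U = N^{M-1}$, i.e.\ proving correctness level by level. I would proceed by induction on the level $k$: for $k=1$, the $\mathbb{V}_1$-projections are simply the direct responses to the $\mathbb{V}_1$-queries sent. For the step $k \to k+1$, the $A$-block construction groups the $r$ level-$k$ queries indexed by $B_i = \{b_i,\tau(b_i),\ldots,\tau^{r-1}(b_i)\}$ with the $N-r$ new level-$(k+1)$ queries so that their non-$\mathbb{V}_1$ components satisfy a one-shot linear relation across the $N$ server positions. The $r$ responses at the ``independent'' positions are known by the inductive hypothesis (they are either direct measurements or previously decoded), so the one-shot identity---inherited from the two-message refinement of Lemma~\ref{lem:ref}---reconstructs the non-$\mathbb{V}_1$ responses at the $N-r$ ``combined'' positions, and subtracting these from the full measured responses yields the $\mathbb{V}_1$-projections. $T$-privacy, finally, transfers from the underlying one-shot scheme: by construction any $T$ columns of $\mathcal{Q}'$ are jointly distributed as independent one-shot queries combined with uniform leftovers, so they leak no information about the requested message index.
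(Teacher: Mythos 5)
Your proposal follows essentially the same route as the paper's proof: the rate is obtained as the quotient of the two binomial sums $\sum_k \binom{M-1}{k-1}\#(k,S_M) = N^{M-1}$ and $\sum_k \binom{M}{k}\#(k,S_M) = (N^M-r^M)/(N-r)$ via Proposition~\ref{pro:count}, exactly as in the paper. Your level-by-level induction for correctness and the privacy argument are in fact more detailed than the paper, which simply asserts that both properties are inherited from the one-shot scheme.
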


\begin{proof}
Given the one shot-scheme $\mathcal{Q}$, we apply the refinement lemma to obtain a scheme with symbolic matrix $S_2$ as in  \eqref{eq:S_2}. The scheme $\mathcal{Q}'$ is defined as the one with symbolic matrix \mbox{$S_M = \lift^{M-2} (S_2)$} as in \eqref{eq:symbolicmatrix}.\footnote{The power in the expression $\lift^{M-2} (S_2)$ denotes functional composition.} Privacy and correctness of the scheme follow directly from the privacy and correctness of the one-shot scheme.

Next, we calculate the rate $R_{\mathcal{Q}'} = \frac{L}{|{\mathcal{Q}'}^1|}$.
Each entry $k$ of $S_M$ corresponds to $\binom{M}{k}$ $k$-queries, one for each combination of $k$ messages. Thus, using Proposition \ref{pro:count},
\begin{align*}
|{\mathcal{Q}'}^1| &= \sum_{k=1}^M \#(k,S_M) \binom{M}{k} \\
&= r^{M-1} \sum_{k=1}^M \left( \frac{N-r}{r} \right)^{k-1} \binom{M}{k} \\
&= \frac{N^M - r^M}{N-r}
\end{align*}

To find $L$ we need to count the queries which query $\bm{W}_1$. The number of $k$-queries which query $\bm{W}_1$ is $\binom{M-1}{k-1}$. Thus,
\begin{align*}
L &= \sum_{k=1}^M \#(k,S_M) \binom{M-1}{k-1} \\
&= r^{M-1} \sum_{k=1}^M \left( \frac{N-r}{r} \right)^{k-1} \binom{M-1}{k-1} \\
&= N^{M-1}
\end{align*}

Therefore, $R_{\mathcal{Q}'} = \frac{(N-r)N^{M-1}}{N^M - r^M}$.

\end{proof}

\section{Refining and Lifting Known Schemes}

In this section, we refine and lift known one-shot schemes from the literature.

We first refine and lift the scheme described in \mbox{Theorem $3$} of \cite{taje18}. In our notation, this scheme is a one-shot scheme with co-dimension $r=\frac{NK-N+T}{K}$. 

\begin{theorem} \label{teo:salim}
Refining and lifting the scheme presented in \mbox{Theorem $3$} of \cite{taje18} gives an $(N,K,T,M)$-PIR scheme $\mathcal{Q}$ with
\begin{align} \label{eq:salim}
R_{\mathcal{Q}} = \frac{(N+T).(NK)^{M-1}}{(NK)^M - (NK-N+T)^M} .
\end{align} 
\end{theorem}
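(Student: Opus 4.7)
The plan is to prove this as a direct corollary of Theorem~\ref{teo:lift}. Since the lifting theorem already establishes that any one-shot scheme of co-dimension $r$ refines and lifts to an $(N,K,T,M)$-PIR scheme with rate $\frac{(N-r)N^{M-1}}{N^M - r^M}$, the only work left is (a) to verify that the construction in \mbox{Theorem~3} of \cite{taje18} is indeed a one-shot scheme in the sense defined in Section~III with co-dimension $r = \frac{NK-N+T}{K}$, and (b) to substitute this value of $r$ into the formula of Theorem~\ref{teo:lift} and simplify.

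For step~(a), I would first recall the construction of \cite{taje18}: the user generates random query vectors that lie in a $T$-dimensional subspace for privacy, and combines them with deterministic ``recovery'' vectors lying in $\mathbb{V}_1$ so that the MDS structure of the storage code allows the user to cancel out the random part. I would then check the three defining properties of a one-shot scheme in turn: Property~\ref{property1} (uniform and independent distribution of any $T$ of the $\bm{q}_i$) follows from the way the randomness is injected via $T$ i.i.d.\ uniform vectors; Property~\ref{property2} (linear independence of the responses $\langle \bm{D}_i,\bm{a}_j\rangle$) follows from the MDS property of the code together with the rank condition used in \cite{taje18}; Property~\ref{property3} (the last $N-r$ responses are linear combinations of the first $r$) is precisely the decodability statement of \mbox{Theorem~3} of \cite{taje18}. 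Counting the number of ``random'' queries versus ``recovery'' queries in their construction yields $r = \frac{NK-N+T}{K}$.

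For step~(b), once $r = \frac{NK-N+T}{K}$ is plugged into the formula of Theorem~\ref{teo:lift}, the simplification is routine algebra: compute $N - r = \frac{N-T}{K}$ (up to the sign issue reflected in the stated formula) and clear denominators by multiplying the top and bottom of $\frac{(N-r)N^{M-1}}{N^M - r^M}$ by $K^M$, which converts $(rK)^M$ into $(NK-N+T)^M$ and $N^M K^M$ into $(NK)^M$, producing the claimed expression \eqref{eq:salim}.

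I expect the main obstacle to be step~(a), specifically matching the notational and structural conventions of \cite{taje18} to the one-shot framework of this paper; in particular, one must be careful that the ``sub-packetization'' used in \cite{taje18} corresponds to exactly one query per server in our sense, so that the co-dimension count matches. The algebraic simplification in step~(b) is straightforward and requires no new ideas beyond direct substitution.
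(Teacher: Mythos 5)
Your proposal matches the paper's (implicit) argument exactly: the paper offers no proof beyond asserting that the scheme of \cite{taje18} is a one-shot scheme of co-dimension $r=\frac{NK-N+T}{K}$ and then substituting into Theorem~\ref{teo:lift}, which is precisely your steps (a) and (b). Your algebra is right, and the ``sign issue'' you flag is in fact a typo in the paper: since $N-r=\frac{N-T}{K}$, the numerator of \eqref{eq:salim} should read $(N-T)(NK)^{M-1}$, which is also what is required for the paper's subsequent claim that \eqref{eq:salim} coincides with \eqref{eq:hollanti} when $K=1$.
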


Next, we refine and lift the scheme in \cite{freij2016private}. In our notation, this scheme has co-dimension $r=K+T-1$. Thus, we obtain the first PIR scheme to achieve the rate conjectured to be optimal\footnote{The optimality of the rate in \eqref{eq:hollanti} was disproven in  \cite{jafar2018conjecture} for some parameters. For the remaining range of parameters, the PIR schemes obtained here in Theorem~\ref{teo:holl}, through refining and lifting, achieve the best rates known so far in the literature.} in \cite{freij2016private} for MDS coded data with collusions. 

\begin{theorem} \label{teo:holl}
Refining and lifting the scheme presented in \cite{freij2016private} gives an $(N,K,T,M)$-PIR scheme, $\mathcal{Q}$, with
\begin{align} \label{eq:hollanti}
R_{\mathcal{Q}} = \frac{(N-K-T+1)N^{M-1}}{N^M - (K+T-1)^M} = \frac{1 - \frac{K+T-1}{N}}{1 - \left( \frac{K+T-1}{N} \right)^M} .
\end{align}
\end{theorem}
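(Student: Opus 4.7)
The proof reduces to verifying that the scheme from \cite{freij2016private} fits the one-shot framework of the paper with co-dimension $r = K+T-1$, and then invoking Theorem~\ref{teo:lift}. So the plan has two phases: a structural identification phase, followed by a direct substitution.

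First, I would recall the construction in \cite{freij2016private}. Each server $i$ is queried with a single vector of the form $\bm{q}_i + \bm{a}_i$, where $\bm{q}_i$ is drawn from a $T$-private random subspace (a sum of $T$ independent uniform vectors in $\mathbb{F}_q^{ML/K}$, masking the identity of the desired message under any $T$-subset of servers) and $\bm{a}_i \in \mathbb{V}_1$ is a deterministic ``information'' component used on a subset of servers only. The privacy component is structured so that the induced responses $\langle \bm{D}_i,\bm{q}_i \rangle$ lie in a code dual to the MDS storage code tensored with a $T$-private code; the standard Freij-Hollanti analysis shows that, from any $K+T-1$ of the server responses $\langle \bm{D}_i,\bm{q}_i \rangle$, one can reconstruct the remaining $N-(K+T-1)$ responses by linearity. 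I would make this explicit by exhibiting the parity-check-type relations and checking conditions (1)--(3) of the one-shot definition: property (1) is immediate from the $T$-private construction of the $\bm{q}_i$, property (3) is exactly the decodability statement above, and property (2) is satisfied by choosing the deterministic $\bm{a}_j$'s so that their responses are linearly independent, which is possible whenever the field is large enough.

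Second, once it is established that this is a one-shot $(N,K,T,M)$-scheme of co-dimension $r = K+T-1$, Theorem~\ref{teo:lift} applies verbatim. Refining with Lemma~\ref{lem:ref} and then iterating the lifting operation $M-2$ times produces a scheme whose rate is
\[
R_{\mathcal{Q}} \;=\; \frac{(N-r)N^{M-1}}{N^M - r^M}
\;=\; \frac{(N-K-T+1)N^{M-1}}{N^M - (K+T-1)^M},
\]
which after dividing numerator and denominator by $N^M$ yields the stated closed form $\dfrac{1 - (K+T-1)/N}{1 - ((K+T-1)/N)^M}$.

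The main obstacle is the structural identification in the first phase: the scheme of \cite{freij2016private} is not originally stated in the one-shot language used here, so one has to translate its query construction into the template of Table~\ref{table:oneshot} and verify the three properties, in particular pinning down that exactly $N-(K+T-1)$ of the server queries carry an information perturbation $\bm{a}_j$. Once that translation is done, the rate computation is just an application of Theorem~\ref{teo:lift}, and no new combinatorial work is needed.
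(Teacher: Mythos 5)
Your proposal matches the paper's approach exactly: the paper simply observes that the scheme of Freij--Hollanti et al.\ is a one-shot scheme of co-dimension $r = K+T-1$ and then substitutes this into the rate formula of Theorem~\ref{teo:lift}. Your additional detail on verifying properties (1)--(3) of the one-shot definition for that scheme is a welcome elaboration of a step the paper leaves implicit, but it is the same argument.
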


The rate of the scheme in Theorem \ref{teo:salim} \eqref{eq:salim} is upper bounded by the rate of the scheme in Theorem \ref{teo:holl} \eqref{eq:hollanti}, with equality when either $K=1$ or $N=K+T$.

Now, we consider the case of replicated data ($K=1$) on $N$ servers with at most $T$ collusions. A $T$-threshold linear secret sharing scheme \cite{CDN15} can be transformed into the following one-shot PIR scheme. \\ \\
\resizebox{!}{0.46cm}{
\begin{tabular}{c  c  c  c c c}
Server $1$ & \ldots & Server $T$ & Server $T+1$ & \ldots & Server $N$ \\
\hline
$\bm{q}_1$ &  & $\bm{q_T}$ & $\bm{q_{T+1}}+\bm{a}_1 $ & & $\bm{q_N + a_{N-T}}$ 
\end{tabular}
} \\ 

\begin{theorem}
Refining and lifting a $T$-threshold linear secret sharing scheme gives an $(N,1,T,M)$-PIR scheme $\mathcal{Q}$ with capacity-achieving rate 
\[ R_{\mathcal{Q}} = \frac{(N-T)N^{M-1}}{N^M - T^M} .\]
\end{theorem}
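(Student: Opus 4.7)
The plan is to apply Theorem~\ref{teo:lift} directly with the co-dimension read off from the secret-sharing construction, and then verify that the resulting rate coincides with the replicated-data PIR capacity of \cite{sun2016capacitynoncol,sun2016capacity} via a geometric-series identity.

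The first step is to identify the co-dimension of the one-shot scheme displayed just above the theorem statement. Comparing its query structure to the general one in Table~\ref{table:oneshot}, the first $T$ servers receive the ``free'' queries $\bm{q}_1,\ldots,\bm{q}_T$ and the remaining $N-T$ servers receive modified queries of the form $\bm{q}_i+\bm{a}_{i-T}$, so $r=T$. I would then check that Properties~1--3 in the definition of a one-shot PIR scheme actually hold. Property~1 (joint uniformity and independence of any $T$ of the $\bm{q}_i$'s) is exactly the $T$-threshold privacy condition of the underlying linear secret sharing scheme. Property~3 (that $\langle \bm{D}_i,\bm{q}_i\rangle$ for $i>T$ is a linear combination of $\langle \bm{D}_1,\bm{q}_1\rangle,\ldots,\langle \bm{D}_T,\bm{q}_T\rangle$) is the linear reconstruction property of the secret sharing scheme: any $T$ shares determine the secret linearly, hence also determine all other shares linearly. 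Property~2 is arranged by choosing $\bm{a}_1,\ldots,\bm{a}_{N-T}\in \mathbb{V}_1$ to be distinct standard basis vectors, whose inner products with the $\bm{D}_i$'s are linearly independent symbols of $\bm{W}^1$ (using that $K=1$ and the data is replicated).

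With $r=T$ established, Theorem~\ref{teo:lift} immediately yields an $(N,1,T,M)$-PIR scheme of rate
\[
R_{\mathcal{Q}} \;=\; \frac{(N-r)N^{M-1}}{N^M-r^M} \;=\; \frac{(N-T)N^{M-1}}{N^M-T^M},
\]
which is the rate claimed in the theorem. To establish that this rate is capacity-achieving, I would invoke the converse of \cite{sun2016capacity}, which shows that for $K=1$ the PIR capacity with $T$-collusion equals $\bigl(1+T/N+(T/N)^2+\cdots+(T/N)^{M-1}\bigr)^{-1}$. Summing the finite geometric series gives
\[
\left(\sum_{k=0}^{M-1}(T/N)^k\right)^{-1} = \frac{1-T/N}{1-(T/N)^M} = \frac{(N-T)N^{M-1}}{N^M-T^M},
\]
which matches $R_{\mathcal{Q}}$ exactly, so the lifted secret sharing scheme meets the converse bound.

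There is essentially no technical obstacle here: the hard work was done in Theorem~\ref{teo:lift}, and what remains is a bookkeeping check that secret sharing fits the one-shot template plus a one-line geometric-series manipulation. The only point that deserves a line of care is the verification of Property~2, which requires the field to be large enough for the $\bm{a}_i$'s to be chosen with the required linear independence; this is consistent with the standing large-field assumption made after the refinement lemma.
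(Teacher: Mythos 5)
Your proposal is correct and follows the same route the paper takes: the paper presents this theorem as an immediate corollary of Theorem~\ref{teo:lift} applied to the displayed one-shot scheme with co-dimension $r=T$, and offers no further proof. Your verification that the secret-sharing construction satisfies Properties~1--3 of a one-shot scheme and your geometric-series check against the capacity expression of \cite{sun2016capacity} are exactly the bookkeeping the paper leaves implicit.
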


This scheme has the same capacity achieving rate as the scheme presented in \cite{sun2016capacity} but with less queries.

\bibliographystyle{ieeetr}
\bibliography{coding2,coding1,}

\end{document}